\newtheorem{lemma}{Lemma}[section]
\newtheorem{theorem}{Theorem}[section]
\newtheorem{claim}{Claim}[section]
\theoremstyle{remark}
\newtheorem{case}{Case}[]
\title{Single Sample Prophet Inequality for Uniform Matroids of Rank $2$}
\author{Kanstantsin Pashkovich, Alice Sayutina}
\affil{University of Waterloo\\
Department of Combinatorics \& Optimization\\
}
\begin{document}
\date{}\maketitle
\begin{abstract}
    We study the prophet inequality when the gambler has an access only to a single sample from each distribution. Rubinstein, Wang and Weinberg showed that an optimal guarantee of $1/2$ can be achieved when the underlying matroid has rank $1$, i.e. in the case of a single choice. We show that this guarantee can be achieved also in the case of a uniform matroid of rank $2$ by a deterministic mechanism, and we show that this is best possible among deterministic mechanisms. We also conjecture that a straightforward generalization of our policy achieves the guarantee of $1/2$ for all uniform matroids.
\end{abstract}
\section{Introduction}

We study the single-sample prophet inequalities (SSPI) for uniform matroids. This is a variation of the prophet inequalities problem where the gambler does not know the distributions $X_1, X_2, \ldots, X_n$ for the arriving items, but has access only to a single sample $s_1 \sim X_1$, $s_2 \sim X_2$, \ldots, $s_n \sim X_n$ from each of the distributions.
After getting access to the samples, the gambler is presented realizations $r_1 \sim X_1$, $r_2 \sim X_2$, \ldots, $r_n \sim X_n$, and needs to decide whether to accept the elements in the online fashion. 
The problem is to design a mechanism  such that the expected value of the items accepted by the gambler is a good approximation of the value of the items accepted by the prophet, where the expectation is calculated with respect to the samples and 
realizations.

Let $k$ be the rank of the uniform matroid defining the feasibility constraints. We propose the following mechanism for the gambler. Here and later, we assume that all the values considered over all items have a perfect total order. In case when there are items with the same sample or realization values, we tiebreak them arbitrarily,
e.g. based on a randomly sampled $[0, 1]$ tie-breaker for each value.
We also assume that all the values are nonnegative. Indeed, neither the prophet nor the gambler accepts items with negative values, and so it is sufficient to design a mechanism only in the setting where no item has a negative value. 

\begin{algorithm}
\begin{algorithmic}[1]
\caption{Deterministic mechanism for uniform matroids of rank $k$.}\label{alg}
\State Obtain samples of the items' values $s_1 \sim X_1$, $s_2 \sim X_2$, \ldots, $s_n \sim X_n$.

\State Define $T$  as the $k$-th largest value among $s_1$, \ldots, $s_n$ if $n\geq k$. Define $T$ as~$0$ if $n< k$.

\State When item $i$, $i=1,\ldots,n$ arrives, accept it if $r_i>T$  and the total number of items accepted so far is less than $k$.

\end{algorithmic}
\end{algorithm}

 Note that in the case $k=1$ the mechanism in Algorithm~\ref{alg} behaves the same way as the $2$-competitive mechanism for uniform rank-1 matroids as studied in~\cite{rubinstein}. We prove that the mechanism in Algorithm~\ref{alg}
is  $2$-competitive also in the case $k=2$. We conjecture that this mechanism is $2$-competitive for $k > 2$ as well. 
\begin{theorem}\label{thm:main}
    The mechanism in Algorithm~\ref{alg} is $2$-competitive when $k=2$.
\end{theorem}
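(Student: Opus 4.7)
My plan is to establish $\mathbb{E}[G]\ge \tfrac12\mathbb{E}[P]$, where $G$ denotes the gambler's value from Algorithm~\ref{alg} and $P=M_1+M_2$ is the prophet's value (with $M_1\ge M_2$ the two largest realizations at indices $i^\star,i^{\star\star}$, and $S_1\ge S_2=T$ the two largest samples), by a case analysis on $|J|$ with $J=\{i:r_i>T\}$, combined with the sample--realization exchangeability. Since $s_i$ and $r_i$ are i.i.d.\ copies of $X_i$ for each $i$, the joint law of $((s_1,r_1),\ldots,(s_n,r_n))$ is invariant under the simultaneous swap $s_i\leftrightarrow r_i$ applied to any subset of indices. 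Applying this swap to every coordinate shows that $(M_1,M_2,S_1,T)$ has the same distribution as $(S_1,T,M_1,M_2)$, and in particular $\mathbb{E}[T]=\mathbb{E}[M_2]$. Therefore the theorem reduces to proving $2\,\mathbb{E}[G]\ge \mathbb{E}[M_1]+\mathbb{E}[T]$.

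Next, I would split on $|J|$. If $|J|=0$ then $M_1\le T$, so $G=0$ while $M_1+M_2\le 2T$; if $|J|=1$ then $M_2\le T<M_1$, $J=\{i^\star\}$ and $G=M_1$; if $|J|=2$ then $J=\{i^\star,i^{\star\star}\}$, both are accepted, and $G=M_1+M_2=P$ exactly; if $|J|\ge 3$ then the two accepted items each have value $>T$ so $G\ge 2T$, but the gambler may miss $i^\star$ or $i^{\star\star}$ because of the arrival order. Summing these pointwise bounds and using $\mathbb{E}[T]=\mathbb{E}[M_2]$, the cases $|J|\in\{1,2\}$ contribute surplus toward the target inequality, while the case $|J|=0$ creates a loss of at most $2T$ on an event where, by definition, $M_1\le T$.

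The main obstacle is the case $|J|\ge 3$, where the bound $G\ge 2T$ can fall short of $M_1+M_2$ by as much as $M_1-T$, and this deficit must be absorbed by the surplus from the easier cases. My plan here is to exploit the per-pair swap $s_i\leftrightarrow r_i$ restricted to indices in $J$: on an instance with large $M_1-T$ and $|J|\ge 3$, swapping a large realization with its sample tends to push $T$ upward (since there are then at least two other samples above the old $T$), producing a coupling between large-deficit events and events on which $M_1-T$ is small. Making this coupling sufficiently quantitative, while accounting simultaneously for the change in the identity of $T$ and the membership of $J$ under the swap, is what I expect to be the main technical challenge in closing the accounting.
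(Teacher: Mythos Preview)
Your proposal is not a proof: it is a plan that stops exactly at the hard step. The reduction via $\mathbb{E}[T]=\mathbb{E}[M_2]$ and the treatment of $|J|\le 2$ are fine, but you leave both deficit regimes, $|J|=0$ and $|J|\ge 3$, unresolved and explicitly label the latter as ``the main technical challenge.'' The coupling you sketch for $|J|\ge 3$---swap $s_i\leftrightarrow r_i$ on indices in $J$---is not obviously measure-preserving once you restrict to the event $\{|J|\ge 3\}$, because $J$ itself changes under the swap; nor does the claimed effect on $T$ follow: if $s_{i^\star}$ was already one of the two largest samples (nothing forbids this), swapping $r_{i^\star}\leftrightarrow s_{i^\star}$ need not increase the second-largest sample. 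You also give no mechanism for the $|J|=0$ deficit. Without a precise bijection or a quantitative inequality linking the surplus on $|J|\in\{1,2\}$ to the combined deficit on $|J|\in\{0\}\cup\{\ge 3\}$, the argument does not close.

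For contrast, the paper does not argue via a global sample/realization symmetry. It works in the pointwise model (fix the unordered pair $\{y_i,z_i\}$ for each $i$, merge all $2n$ values into a sorted list $w_1>\cdots>w_{2n}$), computes for each position $j$ the exact probability $p_j$ that the prophet takes $w_j$ and a lower bound $q_j$ on the probability that the gambler (against the worst arrival order) takes $w_j$, and then proves the prefix-sum domination $2\sum_{j\le i}q_j\ge\sum_{j\le i}p_j$ for every $i$. The work is a lengthy case analysis on the positions $j^*,k^*$ of the first two \texttt{Z}-elements, after which the comparison reduces to elementary inequalities between explicit fractions. If you want to pursue your route, you would need to replace the informal coupling by a concrete per-event inequality that actually controls the $|J|\ge 3$ and $|J|=0$ deficits; as written, the proposal identifies the obstacle but does not overcome it.
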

Moreover, the model used in the proof of Theorem~\ref{thm:main} shows that Algorithm~\ref{alg} is ``pointwise" $2$-competitive as  defined in~\cite{caramanisdutting} when $k=2$. ``Pointwise" SSPI inequalities allow samples and realizations for the same item to be correlated.

\subsection{Previous works}
Prophet inequalities were extensively studied in the context of full information about the values' distributions~\cite{chawla2010multi},\cite{dutting2015polymatroid}, \cite{gravin2019prophet}, \cite{kw2012matroid}, \cite{krengel1977semiamarts}, \cite{samuelcahn1984comparison}.

In~\cite{rubinstein}, it was shown that in the case of single-choice, i.e. in the case of uniform matroids of rank $1$, the optimal possible ratio of $2$ is achieved in the SSPI setting. Also, in~\cite{rubinstein} it was shown that when all distributions are identical then for any $\varepsilon>0$ the ratio $\approx 0.745$ can be achieved within the factor $(1+\varepsilon)$ with access to $O(n)$ samples, where the ratio  $\approx 0.745$ is also the optimal ratio for the full information case.

The results of~\cite{rubinstein} for uniform matroids of rank $1$ were crucial for~\cite{caramanis2021revisited} to improve  SSPI guarantees for several well studied downwards closed families.

For the choice of $k$ elements, i.e. in the case of uniform matroids of rank $k$, in~\cite{azar} it was provided a mechanims for SSPI with competitive guarantee $O(1-\frac{1}{\sqrt{k}})$. This is asymptotically matching the best known competitive guarantee in the case of prophet inequalities with full information for uniform matroids of rank $k$~\cite{azaruniform}.

In~\cite{azar}, the authors provided a blackbox reductions showing that if a family of downwards closed constraints admits an \emph{order oblivious secretary} mechanism with competitive guarantee $\alpha$ then the same family admits a ``pointwise" SSPI with the same competitive guarantee $\alpha$. Together with the fact that many known mechanisms for the secretary problem are order oblivious \cite{plaxton}, \cite{jaillet},  \cite{korula}, \cite{ma}, \cite{soto2013matroid} the blackbox reduction led to a series of SSPI mechanisms for different matrioids. Recently, in~\cite{caramanisdutting} it was shown that also the ``reverse" blackbox reduction holds, i.e. if a family of downward closed constraints  admits a ``pointwise" SSPI with competitive guarantee $\alpha$ then the same family admits an order oblivious secretary mechanism with competitive guarantee $2\alpha$.

\subsection{Model}
We make use of a standard model for SSPI, e.g. the same model was used in~\cite{rubinstein}. We assume  that for each item $i$, $i=1,\ldots,n$ we have two given values $y_i$ and $z_i$, where $y_i > z_i$.
Then with probability $1/2$ we have that $(s_i, r_i)$ equals  $(y_i, z_i)$, and with probability $1/2$ we have that $(s_i, r_i)$ equals
$(z_i, y_i)$. Every $\alpha$-competitive mechanism in such model is also $\alpha$-competitive in the general SSPI setting.

Let us consider all values $y_i$, $z_i$, $i=1,\ldots,n$, sorted in descending order.
Let this sequence be $w_1, w_2, \ldots, w_{2n}$, i.e. $w_1 > w_2 > \ldots > w_{2n}$.

Recall that in the considered model for every $i$, $i=1,\ldots,n$ exactly one of the values among $y_i$ and $z_i$ is picked to be equal to $s_i$ while the other to be equal to $r_i$. We refer to the indices in the sequence $w_1, w_2, \ldots, w_{2n}$ as \textit{elements}. Given an element $j$, $j=1,\ldots,2n$ if $w_j$ is picked to be equal to $s_i$ for some $i$, $i=1,\ldots,n$, we refer to $j$ as an \textit{\texttt{S}-element}. Similarly,  if $w_j$ is picked to be equal to  $r_i$ for some $i$, $i=1,\ldots,n$, we refer to $j$ as an  \textit{\texttt{R}-element}. Similarly, each element $j$, $j=1,\ldots,2n$ can be either a \textit{\texttt{Y}-element} or \textit{\texttt{Z}-element}
depending on whether $w_j$ is $y_i$ or $z_i$ for some $i$, $i=1,\ldots,n$.

We say that two values or two elements in the sequence $w_1, w_2, \ldots, w_{2n}$ are \textit{paired}  if they are $y_i$ and $z_i$ for the same item $i$, $i=1,\ldots,n$.

Let $j^*$ be the smallest \texttt{Z}-element, and $k^*$ be the second smallest \texttt{Z}-element among $1,\ldots, 2n$. Thus all elements $\{1, 2, \ldots, k^*-1\} \setminus \{j^*\}$ are \texttt{Y}-elements. 
Let $j^y$ be the \texttt{Y}-element, which is paired with $j^*$, and let $k^y$ be the \texttt{Y}-element, which is paired with $k^*$.
Naturally, all $j^*$, $j^y$, $k^*$, $k^y$ are distinct and satisfy
$j^y < j^*$, $k^y < k^*$.

For the sake of exposition, for $j=1,\ldots, 2n$ we say that a person, i.e. the gambler or the prophet, \textit{accepted element $j$} if this person accepted the item with the value $w_j$. 

\subsection{Bad Example}

Let us show that there is no deterministic mechanism for SSPI that achieves the competitiveness ratio better than $2$ on uniform matroids of rank $2$.

Let us assume that the gambler observes three items with following sample values $s_1=m$, $s_2=m$, $s_3=0$ where $m>0$. After that the gambler sees two realizations $r_1=m$ and $r_2=m$ before the realization of values for the third item is revealed.

Let us assume that in this scenario the gambler accepts $\beta$ items before the value of the third item is revealed.

If $\beta\leq 1$ then the mechanism cannot be $2$-competitive in the case, when the value of the first two items is $m$ with probability $1$ and the value of the last  item is $0$ with probability $1$. Indeed, the expected gain of the prophet in this case is $2m$, while the gain of the gambler is $\beta m$.

If $\beta=2$ then the mechanism cannot be  $\alpha$-competitive for $\alpha<2$, when the value of the first two items are $m$ with probability $1$ and the value of the third item is $0$ or $M$ with probabilities $1/2$ and $1/2$, respectively, where $M\gg m$. Indeed, the expected gain of the prophet in this case is at least $ M/2$,
while the gain of the gambler is at most 
\[
1/2 \cdot(2m)+1/2\cdot\left(1/2\cdot(2m)+ 1/2\cdot(m+M)\right)=1/4\cdot M+7/4\cdot m\,.
\]

\subsection{Expected Gain of Prophet}

We characterize the expected gain of the prophet by computing for each  element in the sequence $w_1$, $w_2$, \ldots, $w_{2n}$  the probability that the prophet accepts this element. 

\begin{lemma}\label{lm:prophet_prob} For each $j=1,\ldots, 2n$, the prophet accepts the element $j$ with probability $p_j$, where

$$p_j := \begin{cases}
j{/}2^j=2\cdot\frac{j}{2^{j+1}} &\qquad \text{if}\quad 1 \le j < j^*\\ 
(j-1){/}2^{j-1} &\qquad \text{if}\quad j = j^* \\
1{/}2^{j-2} &\qquad \text{if}\quad j^* < j < k^* \\
1{/}2^{j-3} &\qquad \text{if}\quad\qquad j = k^* \\
0 &\qquad \text{otherwise} \,.\\
\end{cases}$$

Thus, the expected gain of the prophet equals $\sum_{j=1}^{2n} p_j w_j$.
\end{lemma}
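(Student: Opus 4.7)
The plan is to characterize the prophet's selection explicitly in terms of the $S/R$ labels and then compute, for each range of $j$, the probability of the relevant event using the independence structure built into the model.

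First I would observe that because the matroid has rank $2$, the prophet accepts exactly the two items whose realizations are largest, i.e.\ the two $\texttt{R}$-elements with the smallest indices in the sequence $w_1,\ldots,w_{2n}$. So for any fixed $j$, the prophet accepts $j$ if and only if (i) $j$ is an $\texttt{R}$-element and (ii) at most one element of $\{1,\ldots,j-1\}$ is an $\texttt{R}$-element. The randomness comes only from deciding, independently for each pair $\{y_i,z_i\}$ with probability $1/2$ each, which of the two is $S$ and which is $R$. Consequently the indicator that an element is $\texttt{R}$ is $\mathrm{Bernoulli}(1/2)$, the indicators for elements in different pairs are mutually independent, and within one pair exactly one of the two elements is $\texttt{R}$.

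Next I would split on where $j$ sits relative to $j^*$ and $k^*$. For $1\le j<j^*$, every element in $\{1,\ldots,j\}$ is a $\texttt{Y}$-element, and each of their partners is a $\texttt{Z}$-element with index $\ge j^*>j$; hence the $\texttt{R}$-indicators of $\{1,\ldots,j\}$ are fully i.i.d.\ $\mathrm{Bernoulli}(1/2)$. Then
\[
p_j=\tfrac{1}{2}\cdot\Pr[\text{at most one of }j-1\text{ independent Bernoulli}(1/2)\text{ is }1]=\tfrac{1}{2}\cdot\tfrac{j}{2^{j-1}}=\tfrac{j}{2^j},
\]
as claimed. For $j=j^*$, conditioning on $j^*$ being $\texttt{R}$ is the same as conditioning on its partner $j^y$ being $\texttt{S}$; the remaining $j^*-2$ elements of $\{1,\ldots,j^*-1\}\setminus\{j^y\}$ are $\texttt{Y}$-elements with partners outside $\{1,\ldots,j^*-1\}$, so their $\texttt{R}$-indicators are again i.i.d., and a short binomial count yields $p_{j^*}=(j^*-1)/2^{j^*-1}$.

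For $j^*<j\le k^*$ the key observation is that the pair $\{j^y,j^*\}$ lies entirely in $\{1,\ldots,j-1\}$ and contributes \emph{exactly one} forced $\texttt{R}$-element there. So the condition ``at most one $\texttt{R}$ in $\{1,\ldots,j-1\}$'' becomes ``no $\texttt{R}$ among the remaining $\texttt{Y}$-elements of that prefix.'' For $j^*<j<k^*$ those remaining $\texttt{Y}$-elements (after excluding $j^y$) are $j-3$ independent Bernoullis, giving $p_j=(1/2)\cdot 2^{-(j-3)}=1/2^{j-2}$; for $j=k^*$ we additionally force $k^y$ to be $\texttt{S}$ (the analogue of the $j=j^*$ case), leaving $k^*-4$ independent Bernoullis that must all be $\texttt{S}$, which yields $p_{k^*}=(1/2)\cdot 2^{-(k^*-4)}=1/2^{k^*-3}$. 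Finally, for $j>k^*$ both pairs $\{j^y,j^*\}$ and $\{k^y,k^*\}$ sit inside $\{1,\ldots,j-1\}$, each contributing one obligatory $\texttt{R}$, so there are already at least two $\texttt{R}$-elements before $j$ and $p_j=0$.

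The conceptually subtle part — and where I would be most careful — is the bookkeeping of which pairs straddle the boundary $j-1\,|\,j$, because this determines both (a) how many ``free'' independent $\texttt{Y}$-elements remain in $\{1,\ldots,j-1\}$ after accounting for the forced partners of $j^*$, $k^*$, and $j$ itself, and (b) whether conditioning on $j$ being $\texttt{R}$ introduces any extra constraints on elements already considered. Once one verifies that in each case $j$'s partner lies strictly above $j-1$, and that the forced partner counts match what is written above, the five case formulas aggregate exactly to the stated $p_j$, and the formula $\sum_j p_j w_j$ for the prophet's expected gain follows by linearity of expectation.
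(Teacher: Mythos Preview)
Your proposal is correct and follows essentially the same approach as the paper: both proofs reduce ``prophet accepts $j$'' to the event ``$j$ is an \texttt{R}-element and at most one \texttt{R}-element lies among $1,\ldots,j-1$'' and then do the identical case split on the position of $j$ relative to $j^*$ and $k^*$, using the same independence/forcing observations about the pairs $\{j^y,j^*\}$ and $\{k^y,k^*\}$. You are a bit more explicit than the paper about \emph{why} the remaining \texttt{Y}-elements in each prefix have independent \texttt{R}-indicators (their \texttt{Z}-partners have larger indices), and your closing remark that ``$j$'s partner lies strictly above $j-1$'' should be read as applying only to the non-boundary cases $j<j^*$ and $j^*<j<k^*$, since for $j\in\{j^*,k^*\}$ you (correctly) handle the partner via the forcing argument instead.
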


\begin{proof}
The prophet selects two largest \texttt{R}-elements among $w_1$, $w_2$, \ldots, $w_{2n}$.

Thus, for every $j$, $j=1,\ldots,2n$ the item with the value $w_j$ is selected if and only if
$j$ is an \texttt{R}-element, and there is at most one \texttt{R}-element among $1, 2, \ldots, j - 1$.

For elements $j$ such that $j < j^*$, the event of $j$ being an \texttt{R}-element and the event of there being at most one \texttt{R}-element among $1, 2, \ldots, j - 1$, are independent. The element $j$ is an \texttt{R}-element with probability $1/2$, and there is at most one element among $1, 2, \ldots, j - 1$ with probability $(j - 1)\frac{1}{2^{j-1}} + \frac{1}{2^{j-1}} = j{/}2^{j-1}$. Since those events
are independent, the probability of $j$ being accepted is
$j{/}2^j$.

Now consider the element $j$ such that $j = j^*$. The event of $j$ being an \texttt{R}-element happens with probability $1/2$, and implies that the
element $j^y$ is an \texttt{S}-element. The rest of the elements among $1, 2, \ldots, j - 1$ are independently either \texttt{R}-elements or \texttt{S}-elements. Thus the probability of having at most one \texttt{R}-element among $1, 2, \ldots, j-1$, conditioned on the event of $j$ being an \texttt{R}-element, is $(j-1){/}2^{j-2}$.
Thus the probability that $j$ is selected is $(j-1){/}2^{j-1}$.

Now consider an element $j$ such that $j^* < j < k^*$. There is exactly one \texttt{R}-element among $j^y, j^*$. Thus, for the element $j$ to be accepted the rest of the elements among $1, 2, \ldots, j - 1$ have to be \texttt{S}-elements. This happens with
probability $1{/}2^{j-3}$, and the element $j$ is an \texttt{R}-element with probability $1/2$. Thus the probability that $j$ is accepted is $1{/}2^{j-2}$.

Now consider the element $j$ such that $j = k^*$. The event of $j$ being an \texttt{R}-element happens with probability $1/2$, and implies that the
element $k^y$ is an \texttt{S}-element. There is also
exactly one \texttt{R}-element among $j^y, j^*$. The remaining elements among $1, 2, \ldots, j - 1$ are independently either \texttt{R}-elements or \texttt{S}-elements, and so to have at most one \texttt{R}-element among $1, 2, \ldots, j-1$ all of those remaining elements must be \texttt{S}-elements.
Thus the probability that $j$ is selected is $1{/}2^{j-3}$.

Finally, for all elements $j$ such that $j > k^*$, there is exactly one \texttt{R}-element among $j^y, j^*$, and exactly one \texttt{R}-element among $k^y, k^*$.
Thus there are two \texttt{R}-element among $1, 2, \ldots, j - 1$, and so $j$ is never selected by the prophet.
\end{proof}

\subsection{Expected Gain of Gambler}

Recall that the adversary cannot control which elements are \texttt{R}-elements,
and which are \texttt{S}-elements. However, the adversary can control in which order
\texttt{R}-elements are presented to the gambler. We  assume that the adversary
behaves in a way which minimizes the gambler's gain.

Recall that in Algorithm~\ref{alg}, the gambler determines the threshold $T$
and accepts elements which are greater than $T$. Thus, the adversary makes the gambler to accept two smallest \texttt{R}-elements that are larger than $T$, if those elements exist. In case there are less than two such elements, the adversary  makes the gambler accept all \texttt{R}-elements that are larger than $T$ and no other elements. From now on, we assume that the adversary acts as above, i.e. in a way leading to the worst gain for the gambler.

\begin{lemma}\label{lm:gambler_prob} For each $j=1,\ldots, 2n$, the gambler accepts the element $j$ with probability at least $q_j$, where
$$q_j := \begin{cases}
(3j - 1){/}2^{j+2} & \qquad \text{if}\quad j \le j^* - 2 \\
(4j - 2){/}2^{j+2} &\qquad \text{if}\quad  j = j^* - 1\text{ and }k^*>j^*+1\text{ and } j^*-1=j^y\\
(4j - 3)/2^{j+2} &\qquad \text{if}\quad  j = j^* - 1\text{ and }k^*>j^*+1\text{ and } j^*-1\neq j^y\\
4j{/}2^{j+2} & \qquad \text{if}\quad j = j^* - 1\text{ and }k^*=j^*+1\\
3{/}2^{j+1} & \qquad \text{if}\quad j = j^*\text{ and }k^*>j^*+1  \\
4{/}2^{j+1} & \qquad \text{if}\quad j = j^*\text{ and }k^*=j^*+1  \\
3{/}2^{j} & \qquad \text{if}\quad j^* < j < k^*-1
\\
1{/}2^{j-2} & \qquad \text{if}\quad j^* < j < k^* \text{ and } j=k^*-1\,.\\
\end{cases}$$
Thus, the expected gain of the gambler is at least $\sum_{j=1}^{2n} q_j w_j$.
\end{lemma}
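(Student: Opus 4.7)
The plan is to compute $P(\text{gambler accepts } j)$ for each $j$ directly and verify it is at least $q_j$.

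The first step is a clean characterization of acceptance. Let $s^1 < s^2$ be the indices of the two smallest-indexed \texttt{S}-elements, so that $T = w_{s^2}$. Combining Algorithm~\ref{alg} with the worst-case adversary described above, the gambler accepts $j$ iff (i) $j$ is an \texttt{R}-element, (ii) $j < s^2$ (so $w_j > T$), and (iii) $|\{j+1,\ldots,s^2-1\}\setminus\{s^1\}| \le 1$ (so that $j$ is among the two \texttt{R}-elements of smallest value above $T$ that the adversary can force on the gambler).

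The second step is to exploit the pairing structure. Within $\{1,\ldots,k^*-1\}$, the only pair with both members inside is $(j^y, j^*)$; all other elements of this range are paired with \texttt{Z}-elements past $k^*$. Hence the \texttt{R}/\texttt{S} statuses of the elements of $\{1,\ldots,k^*-1\}\setminus\{j^*\}$ are mutually independent fair coins, and $j^*$'s status is the negation of $j^y$'s; an analogous statement holds for $(k^y, k^*)$. To compute $P(j \text{ accepted})$, I condition on $j$ being an \texttt{R}-element (probability $1/2$) and split on the number $n_1 \in \{0,1\}$ of \texttt{S}-elements in $\{1,\ldots,j-1\}$. Condition (iii) then forces $s^2 \in \{j+2, j+3\}$ if $n_1 = 0$ and $s^2 \in \{j+1, j+2\}$ if $n_1 = 1$, so I can read off the contribution as a short sum of elementary coin-flip probabilities.

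The sub-cases in the statement correspond exactly to how the small probe window $\{j, j+1, j+2, j+3\}$ intersects $\{j^*, k^*\}$. For $j \le j^* - 4$ the window is strictly before $j^*$, all flips are independent, and a direct computation yields exactly $(3j-1)/2^{j+2}$. For $j \in \{j^*-3, j^*-2, j^*-1\}$ the window touches $j^*$, so I split further on whether $j^y$ lies inside or outside $\{1,\ldots,j\}$ and on whether $j^*-1 = j^y$, producing the multiple cases for $q_{j^*-1}$; the different coefficients $(4j-2)$, $(4j-3)$, $4j$ reflect how many extra flips become forced in each sub-case. For $j = j^*$ the conditioning that $j^*$ is \texttt{R} automatically fixes $j^y$ to be \texttt{S}, and the remaining split is only on whether $k^* = j^* + 1$. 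The cases $j^* < j < k^* - 1$ and $j = k^* - 1$ are analogous with $(k^y, k^*)$ playing the role of $(j^y, j^*)$, and the probe window never reaches beyond $k^*$, so the calculation remains strictly local.

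The main obstacle will be the bookkeeping in the dependent sub-cases: the coin flip that determines $j^*$'s status is the same flip that determines $j^y$'s, so whenever $j^*$ or $j^y$ enters the probe window one must check that conditioning on (i) or on the value of $n_1$ has not already fixed it in a way that depletes the probability below $q_j$. The lemma is stated as a lower bound rather than an equality because in a few sub-cases (notably when $k^* = j^* + 1$) the true probability strictly exceeds $q_j$; the proof proceeds by enumerating the sub-cases and verifying the inequality in each.
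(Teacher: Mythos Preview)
Your proposal is correct and follows essentially the same strategy as the paper's proof. The paper also reduces to the characterization ``$j$ is accepted iff $j$ is an \texttt{R}-element with $j<s^2$ and at most one \texttt{R}-element lies in $\{j+1,\ldots,s^2-1\}$'', and then performs a case analysis over exactly the same probe window $\{j,\ldots,j+3\}$ relative to $j^*,k^*$; the only cosmetic difference is that the paper conditions first on the value of $T$ (your $s^2$) and then enumerates the possible positions of $s^1$, whereas you condition first on $n_1\in\{0,1\}$ and then pin down $s^2$ --- these two orderings partition the same event space into the same pieces and yield the same arithmetic.
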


The key take-away from Lemma~\ref{lm:gambler_prob}
is that element $j$, $j < j^*$ is accepted with probability at least $(3j-1){/}2^{j+2}$. Element $j=j^*$ is accepted with probability at least $3{/}2^{j+1}$, and element $j$, $j^*<j<k^*$ is accepted with probability at least $3{/}2^{j}$. However, in some cases for the comparison with the gain of the prophet later we need a stronger bound, for example, in cases when $k^*=j^*+1$ or if $j = j^*-1$.

\begin{proof} [Proof of Lemma~\ref{lm:gambler_prob}] Let us consider the case analysis based on the position of the element $j$. In each of the cases, we compute the unconditional probabilities for two events to happen simultaneously, in particular for the gambler to accept the element $j$ and for the threshold to be equal to $T$. Afterwards, we provide a desired lower bound $q_j$ by summing up the obtained unconditional probabilities.

\begin{case} $j \le j^* - 4$.

Let us  consider the position of the threshold, i.e. the position of the second largest \texttt{S}-element.
\begin{enumerate}
\item $T=w_{j+3}$. For the threshold $T$ to be equal $w_{j+3}$, the element $j+3$ is an \texttt{S}-element and  exactly one element in $1$,\ldots, $j+2$ is an \texttt{S}-element, i.e. is the largest \texttt{S}-element. Let us now consider the position of the unique \texttt{S}-element in $1$,\ldots, $j+2$. If this unique \texttt{S}-element is not $j+1$ or  $j+2$, the gambler will not be able to accept the element $j$ due to the assumption that the adversary minimizes the gambler's profit. Thus, all elements in  $1$,\ldots, $j$ have to be \texttt{R}-elements, 
the element $j+3$ has to be an \texttt{S}-element, and among $j+1$ and $j+2$ exactly one is an \texttt{R}-element and one is an \texttt{S}-element. Also note, that any such \texttt{S}-\texttt{R} status assignment guarantees that $T$ equals $w_{j+3}$ and the gambler accepts the element $j$.

Let us now compute the probability of the event that $T$ equals $w_{j+3}$ and the gambler accepts the element $j$. There are two choices for the largest \texttt{S}-element. In each of those scenarios we fix the \texttt{S}-\texttt{R} status of $j+3$ elements. Thus the probability of the gambler accepting element $j$ and for the threshold $T$ to be equal $w_{j+3}$ is $2 /2^{j+3}$.

\begin{figure}
    \centering
    \footnotesize

    \colorbox{Apricot}{%
    \begin{tabular}{ccccccccc}

              &     &      & {\scriptsize \color{WildStrawberry} (*)} &       & {\color{WildStrawberry} \scriptsize (*)}  &  {\scriptsize \textcolor{red}{[T]} }&  &  \\
           R & R    & R    & R   & \textbf{S}     & R      &   \textbf{S} & R & \textbf{S} \\
       $j-3$ &$j-2$ &$j-1$ & $j$ & $j+1$ & $j+2$  & $j+3$ & $j+4$ & $j+5$  \\
    \end{tabular}%
    }
    \caption{An example of an \texttt{S}-\texttt{R} status assignment in subcase 1 of case 1, which makes the gambler accept item $j$. The threshold is denoted by \textcolor{red}{[T]}. Elements accepted by the gambler are denoted by \textcolor{WildStrawberry}{(*)}. The larger elements are at the left.}
\end{figure}

\item $T=w_{j+2}$. The largest \texttt{S}-element is either $j+1$ or is among  $1$, $2$, \ldots, $j-2$, $j-1$. Thus there are $j$ choices for the largest \texttt{S}-element, in each of these choices element $j$ will be accepted, and in each of these choices we fix the \texttt{S}-\texttt{R} status of $j+2$ elements. Thus the probability of the gambler accepting element $j$ and for the threshold $T$ to be equal $w_{j+2}$ is $j / 2^{j+2}$.

\item $T=w_{j+1}$. The largest \texttt{S}-element is among  $1$, $2$, \ldots, $j-2$, $j-1$. Thus there are $j-1$ choices for the largest \texttt{S}-element, in each of these choices  we fix the \texttt{S}-\texttt{R} status of $j+1$ elements. Thus the probability of the gambler accepting element $j$ and for the threshold $T$ to be equal $w_{j+1}$ is $(j-1) / 2^{j+1}$.
\end{enumerate}

Hence, the probability that the gambler accepts the element $j$ equals $$\frac{1}{2^{j+3}} (2 + 2j + 4j-4) = \frac{3j - 1}{2^{j+2}}\,.$$

\end{case}

\begin{case} $j = j^* - 3$

Let us again  consider the position of the threshold, i.e. the position of the second largest \texttt{S}-element.
\begin{enumerate}
\item $T = w_{j+1}$. Then the gambler  accepts element $j$ if the largest \texttt{S}-element is one of $1, 2, \ldots, j - 1$. Thus there are $j - 1$ choices for the largest \texttt{S}-element, and in each of those choices we fix the \texttt{S}-\texttt{R} status of $j + 1$ elements. Thus the probability of the gambler accepting element $j$ and $T$ being equal to $w_{j+1}$ is $(j-1){/}2^{j+1}$.

\item $T = w_{j+2}$. Then the gambler accepts element $j$ if the largest \texttt{S}-element is one of $1, 2, \ldots, j - 2, j - 1$ or $j + 1$. 
Thus there are $j$ choices for the largest \texttt{S}-element, in each of these choices  we fix the \texttt{S}-\texttt{R} status of $j+2$ elements. Thus the probability of the gambler accepting element $j$ and for the threshold $T$ to be equal $w_{j+2}$ is $j{/}2^{j+2}$.

\item $T = w_{j+3}$.
Then the gambler accepts element $j$ if the largest \texttt{S}-element is $j+1$ or $j+2$.
If we fix the element $j^* = j+3$ to be an \texttt{S}-element,  then the element $j^y$ is an \texttt{R}-element.
Thus if $j^y$ is in $\{j+1, j+2\}$ there is one choice for the largest \texttt{S}-element, and if $j^y$ is not in $\{j+1, j+2\}$ there are two choices.
In each of these choices we fix the \texttt{S}-\texttt{R} status of
the first $j + 3$ elements except for the element $j^y$, since its
\texttt{S}-\texttt{R} status is determined by the \texttt{S}-\texttt{R} status
of element $j^* = j+3$.
There is at least one choice for the largest \texttt{S}-element, and each such choice happens with probability $1{/}2^{j+2}$. Thus the probability for the gambler to accept element $j$ and for the threshold $T$ to be equal $w_{j+3}$ is at least $1{/}2^{j+2}$.
\end{enumerate}

Hence, the probability that the gambler accepts the element $j$ is at least $$\frac{1}{2^{j+2}} ((2j - 2) + j + 1) = \frac{3j - 1}{2^{j+2}}\,.$$
\end{case}

\begin{case}\label{case:3a} $j = j^* - 2$, $k^* > j^* + 1$

\begin{enumerate}
\item $T = w_{j + 1}$. Then the gambler accepts element $j$ if the largest \texttt{S}-element is one of $1$, $2$, \ldots, $j - 1$. Thus there are $j - 1$ choices for the largest \texttt{S}-element, and in each of those choices we fix the \texttt{S}-\texttt{R} status of $j + 1$ elements. Thus the probability of the gambler accepting element $j$ and $T$ being equal to $w_{j+1}$ is $(j-1){/}2^{j+1}$.

\item \label{case1} $T = w_{j + 2}$. Then the gambler accepts element $j$ if the largest \texttt{S}-element is one of $1$, $2$, \ldots, $j - 2$, $j - 1$ or $j + 1$.
Since $j^* = j+2$ is an \texttt{S}-element, the element $j^y$ is an \texttt{R}-element.

If $j^y \in \{1, 2, \ldots, j - 2, j - 1\}\cup\{j + 1\}$ there are $j - 1$ choices for the largest \texttt{S}-element. In this case, the probability that the gambler accepts element $j$ and $T$ is equal to $w_{j+2}$ is  $(j-1){/}2^{j+1}$.

However if  $j^y \not\in \{1, 2, \ldots, j - 2, j - 1\}\cup\{j + 1\}$, there are $j$ choices and so  the probability that the gambler accepts element $j$ and $T$ is equal to $w_{j+2}$ is $j{/}2^{j+1}$.

 Thus, the probability that the gambler accepts element $j$ and $T$ is equal to $w_{j+2}$ is at least $(j-1){/}2^{j+1}$.

\item \label{case2} $T = w_{j + 3}$. The gambler accepts element $j$ only if the largest \texttt{S}-element is $j + 1$ or $j + 2$.
Recall that exactly one of $j^* = j+2$ and $j^y$ is an \texttt{S}-element. Thus, if $j^y = j + 1$ then we have two  choices for the largest \texttt{S}-element. If $j^y < j + 1$ then we have only one  choice for the largest \texttt{S}-element.

Thus there is one or two choices for the largest \texttt{S}-element, and each time we fix the \texttt{S}-\texttt{R} status of $j + 3$ elements where two of these elements are paired. Thus, the probability, that the gambler accepts element $j$ and that $T$ is equal to $w_{j+3}$, is $1{/}2^{j+2}$ or $2{/}2^{j+2}$, and so is at least $1{/}2^{j+2}$.
\end{enumerate}

Hence, the probability that the gambler accepts the element $j$ is at least $$\frac{1}{2^{j+2}} \left((2j - 2) + (2j - 2) + 1\right) = \frac{4j - 3}{2^{j+2}}\,.$$

Note, that 
$4j - 3\ge 3j - 1$ when $j \ge 2$. So when $j\geq 2$ we get the desired bound. 

When $j = 1$, observe that it is not possible that the estimate $(j-1){/}2^{j+1}$  is tight in second subcase and  the estimate $1{/}2^{j+2}$ is tight in third subcase, because for this in second subcase we need $j^y = j + 1 = 2$ and in third subcase $j^y \ne j + 1 = 2$.
Thus in case $j = 1$, we can derive a tighter bound. The probability that the gambler accepts  element $j$ is at least
$$\frac{1}{2^{j+2}} \min((2j - 2) + 2j + 1, (2j - 2) + (2j - 2) + 2) = \frac{4j - 2}{2^{j+2}} \ge \frac{3j - 1}{2^{j+2}}\,,$$
showing that the desired lower bound holds also for $j=1$.
\end{case}

\begin{case} $j = j^* - 2$, $k^* = j^* + 1$
\begin{enumerate}
\item $T = w_{j + 1}$. This subcase is identical to the first subcase in case~\ref{case:3a}. The probability of the gambler accepting element $j$ and $T$ being equal to $w_{j+1}$ is $(j-1){/}2^{j+1}$.

\item $T = w_{j + 2}$. This subcase is identical to the second subcase in case~\ref{case:3a}. The probability of the gambler accepting element $j$ and $T$ being equal to $w_{j+2}$ is at least $(j-1){/}2^{j+1}$.

\item $T = w_{j + 3}$. Then the gambler  accepts element $j$ if the largest \texttt{S}-element is $j + 1$ or $j + 2$.
However, observe exactly one of $j^* = j+2$ and $j^y$, as well as exactly one of $k^*=j+3$ and $k^y$, is an \texttt{S}-element.
Thus, if $j^y = j + 1$ then we have two  choices for the largest \texttt{S}-element. If $j^y < j + 1$ then we have only one  choice for the largest \texttt{S}-element.

Thus there is at least one choice, and each choice happens with  probability
$1{/}2^{j+1}$, since we are fixing the \texttt{S}-\texttt{R} status of the first $j + 3$ elements, among which there are paired elements $j^*$, $j^y$ and $k^*$, $k^y$.
Then the probability of the gambler accepting element $j$ and $T$ being equal to $w_{j+3}$ is at least $1{/}2^{j+1}$.
\end{enumerate}

Hence, the probability that the gambler accepts the element $j$ is at least 
$$\frac{1}{2^{j+1}} ((j - 1) + (j - 1) + 1) = \frac{4j - 2}{2^{j+2}} \ge \frac{3j - 1}{2^{j+2}},$$
where the last inequality follows from $j\geq 1$.
\end{case}


\begin{case}\label{case:4ab} $j = j^* - 1$, $k^* > j^* + 1$, $j^*-1=j^y$

\begin{enumerate}
\item $T = w_{j+1}$. Then the gambler accepts element $j$ if the largest \texttt{S}-element is one of $1$, $2$, \ldots, $j - 1$. Thus there are $j - 1$ choices for the largest \texttt{S}-element, and in each of those choices we fix the \texttt{S}-\texttt{R} status of $1$, $2$, \ldots, $j$, $j+1$. Note that $j$ and $j+1$ are paired. Thus the probability of the gambler accepting element $j$ and $T$ being equal to $w_{j+1}$ is $(j-1){/}2^{j}$.

\item $T = w_{j+2}$. For this subcase, $j + 1$ needs to be the largest \texttt{S}-element. By making $j + 1$ the largest \texttt{S}-element,  we fix the \texttt{S}-\texttt{R} status of $1$, $2$, \ldots, $j$, $j+1$, $j+2$. Note that $j$ and $j+1$ are paired. Thus the probability of the gambler accepting element $j$ and $T$ being equal to $w_{j+2}$ is $1{/}2^{j+1}$.

\item $T = w_{j+3}$. Although it is possible for the threshold to be at the position $j+3$ when  item $j$ is accepted, we omit this subcase since the total probability from the previous subcases is already sufficient. We remark that the current subcase $T = w_{j+3}$ has different probability depending on whether $k^* = j^* + 2$ or $k^* \ge j^*+3$.
\end{enumerate}

Hence, the probability that the gambler accepts the element $j$ is at least 
\[\frac{1}{2^{j+1}} \left(2(j - 1) + 1\right) = \frac{4j - 2}{2^{j+2}}\,.\]
\end{case}

\begin{case}\label{case:4cd} $j = j^* - 1$, $k^* > j^* + 1$, $j^*-1\neq j^y$

\begin{enumerate}
\item $T = w_{j+1}$. Then the gambler accepts element $j$ if the largest \texttt{S}-element is one of $1$, $2$, \ldots, $j - 1$ but not $j^y$. Thus there are $j - 2$ choices for the largest \texttt{S}-element, and in each of those choices we fix the \texttt{S}-\texttt{R} status of $1$, $2$, \ldots, $j$, $j+1$. Note that $j+1$ and $j^y$ are paired. Thus the probability of the gambler accepting element $j$ and $T$ being equal to $w_{j+1}$ is $(j-2){/}2^{j}$.

\item $T = w_{j+2}$. Note that $j+1$ and $j^y$ are paired, so for this subcase element $j+1$ or $j^y$ need to be the largest \texttt{S}-element. We fix the \texttt{S}-\texttt{R} status of $1$, $2$, \ldots, $j$, $j+1$, $j+2$, where all elements that are not $j+2$, $j+1$ or $j^y$ become \texttt{R}-elements and where the \texttt{S}-\texttt{R} statuses of  $j+1$, $j^y$ are arbitrary. Thus the probability of the gambler accepting element $j$ and $T$ being equal to $w_{j+2}$ is $1{/}2^{j}$.

\item $T = w_{j+3}$. Note that $j+1$ and $j^y$ are paired. So for this subcase element $j+1$ needs to be the largest \texttt{S}-element, otherwise $j$ is not accepted by the gambler. We fix the \texttt{S}-\texttt{R} status of $1$, $2$, \ldots, $j$, $j+1$, $j+2$, $j+3$ where  elements  $j+1$ and $j^y$ are paired. We remark that the current subcase $T = w_{j+3}$ has different probability depending on whether $k^* = j^* + 2$ or $k^* > j^*+2$. Independent on whether $k^*=j+2$, the probability of the gambler accepting element $j$ and $T$ being equal to $w_{j+3}$ is at least $1{/}2^{j+2}$. 
\end{enumerate}

 Thus, the probability that the gambler accepts element $j$ is at least 
\[\frac{1}{2^{j+1}} \left(2(j - 2) + 2+1/2\right) = \frac{4j -3}{2^{j+2}}\,.\]
\end{case}

\begin{case} $j = j^* - 1$, $k^* = j^* + 1$,  $j^*-1=j^y$
\begin{enumerate}
\item $T = w_{j+1}$. This case is identical to the first subcase of case~\ref{case:4ab}. The probability of the gambler accepting element $j$ and $T$ being equal to $w_{j+1}$ is $(j-1){/}2^{j}$.
\item $T = w_{j+2}$. This case is similar to the second subcase of case~\ref{case:4ab}, but where among $1$, $2$, \ldots, $j$, $j+1$, $j+2$ we have paired $j^*$, $j^y$ and paired $k^*$, $k^y$. Thus the probability of the gambler accepting element $j$ and $T$ being equal to $w_{j+2}$ is $1{/}2^{j}$.
\end{enumerate}

Hence, the probability that the gambler accepts  element $j$ is at least 
\[\frac{1}{2^{j+1}} (2(j - 1) + 2) = \frac{4j}{2^{j+2}}\,.\]
\end{case}

\begin{case} $j = j^* - 1$, $k^* = j^* + 1$,  $j^*-1\neq j^y$
This case is similar to case~\ref{case:4cd}, except that in the first subcase we get the estimate $(j-2)/2^j$, in the second $1/2^{j-1}$ and in the third case $0$. Thus, the probability that the gambler accepts the element $j$ is at least 
\[\frac{1}{2^{j+1}} \left(2(j - 2) + 4\right) = \frac{4j}{2^{j+2}}\,.\]
\end{case}

Before we move further, observe that in the case $j = j^*$, element $j$ can only be accepted if it is
an \texttt{R}-element. For this element $j^y$ needs to be an \texttt{S}-element.
Secondly, an element $j$ can be accepted only if it exceeds the threshold, which is the second largest \texttt{S}-element Thus
$j^y$ is the largest \texttt{S}-element, so unlike in the cases with $j < j^*$, there are no alternative choices for the largest \texttt{S}-element.

\begin{case}\label{case:5a} $j = j^*$, $k^* \ge j^* + 3$
\begin{enumerate}
\item $T = w_{j+1}$. This fixes the \texttt{S}-\texttt{R} status of $j+1$ elements $1$, $2$, \ldots, $j - 1$, $j$, $j + 1$. Note that among these elements the elements $j = j^*$ and $j^y$ are paired. Thus the probability of the gambler accepting  element $j$ and $T$ being equal to $w_{j+1}$ is $1{/}2^{j}$

\item $T = w_{j+2}$. This fixes the \texttt{S}-\texttt{R} status of $j+2$ elements $1$, $2$, \ldots, $j - 1$, $j$, $j + 1$, $j + 2$. Note that among these elements the elements $j = j^*$ and $j^y$ are paired. Thus the probability of the gambler accepting  element $j$ and $T$ being equal to $w_{j+2}$ is $1{/}2^{j+1}$
\end{enumerate}

Hence, probability that the gambler accepts the element $j$ is $3{/}2^{j+1}$.
\end{case}

\begin{case} $j = j^*$, $k^* = j^* + 2$
\begin{enumerate}
\item $T = w_{j+1}$. This subcase is identical to first subcase of case~\ref{case:5a}, so the probability is $1{/}2^{j}$ in this subcase.
\item $T = w_{j+2}$. This fixes the \texttt{S}-\texttt{R} status of $1$, $2$, \ldots, $j + 1$, $j+2$.  Note that among these elements the elements $j = j^*$ and $j^y$ are paired, also elements $j+2=k^*$ and $k^y$ are paired. Thus, the probability of the gambler accepting  element $j$ and $T$ being equal to $w_{j+2}$ is $1{/}2^{j}$.
\end{enumerate}

Hence, the probability that the gambler accepts element $j$ is $4{/}2^{j+1}$, which is at least the desired bound  $3{/}2^{j+1}$.
\end{case}

\begin{case} $j = j^*$, $k^* = j^* + 1$
\begin{enumerate}
\item $T = w_{j+1}$. This fixes the \texttt{S}-\texttt{R} status of elements $1$, $2$, \ldots, $j - 1$, $j$, $j+1$. Note that among these elements the elements $j = j^*$ and $j^y$ are paired, also elements $j+1=k^*$ and $k^y$ are paired.  Thus, the probability of the gambler accepting  element $j$ and $T$ being equal to $w_{j+1}$ is $1{/}2^{j-1}$.
\end{enumerate}

Hence, the probability that the gambler accepts element $j$ is $4{/}2^{j+1}$.
\end{case}

Now we consider the cases when $j^* < j < k^*$. For each of these cases, we need to show the desired lower bound $3{/}2^{j}$ for the probability of accepting item $j$.
As previously, we proceed with analysis based on the position of the threshold $T$, i.e. the position of the second largest \texttt{S}-element. Observe that there is exactly one \texttt{S}-element among $\{j^y, j^*\}$. Thus the threshold $T$ is less than $w_j$ only if there are no other \texttt{S}-elements
before $j$ except for the elements $j^y$ and  $j^*$.

\begin{case}\label{case:stage2_1} $j^* < j < k^*$, $k^* \ge j + 3$
\begin{enumerate}
\item $T = w_{j+1}$. For  element $j$ to be accepted we need to fix the \texttt{S}-\texttt{R} status of $j - 1$ elements $\{1,\ldots, j+1\}\setminus \{j^y, j^*\}$. Thus the probability of the gambler accepting element $j$ and $T$ being equal to $w_{j+1}$ is $1{/}2^{j-1}$.
\item $T = w_{j+2}$. For element $j$ to be accepted we need to fix the \texttt{S}-\texttt{R} status of $j$ elements $\{1,\ldots, j+2\}\setminus \{j^y, j^*\}$. Thus the probability of the gambler accepting element $j$ and $T$ being equal to $w_{j+1}$ is $1{/}2^j$.
\end{enumerate}

Hence, the probability that the gambler accepts element $j$ is $3{/}2^{j}$.
\end{case}

\begin{case} $j^* < j < k^*$, $k^* = j + 2$

\begin{enumerate}
\item $T = w_{j+1}$. This case is identical to the first subcase of case~\ref{case:stage2_1}. The
probability of the gambler accepting element $j$ and $T$ being equal to $w_{j + 1}$ is $1{/}2^{j-1}$
\item $T = w_{j+2}$. For the element $j$ to be accepted we need to fix the \texttt{S}-\texttt{R} status of $j-1$ elements $\{1,\ldots, j+2\}\setminus \{j^y, j^*, k^y\}$. The \texttt{S}-\texttt{R} status of element $k^y$ does not need to be fixed because it is paired with $j+2=k^*$. Thus the probability of the gambler accepting element $j$ and $T$ being equal to $w_{j+2}$ is $1{/}2^{j-1}$.
\end{enumerate}

Hence, the probability that the gambler accepts  element $j$ is $4{/}2^{j}$.

\end{case}

\begin{case} $j^* < j < k^*$, $k^* = j + 1$

\begin{enumerate}
\item $T = w_{j+1}$. Again for the element $j$ to be accepted we need to fix the \texttt{S}-\texttt{R} status of $j - 2$ elements $\{1,\ldots, j+1\}\setminus \{j^y, j^*, k^y\}$. Thus the probability, that the gambler accepts element $j$ and $T$ is equal to $w_{j+1}$, is $1{/}2^{j-2}$.
\end{enumerate}

Hence, the probability that the gambler accepts  element $j$ is $4{/}2^{j}$.

\end{case}
\end{proof}
\subsection{Putting Everything Together}

In this section, we provide the proof of Theorem~\ref{thm:main}.
 \begin{proof}[Proof of Theorem~\ref{thm:main}] 
 
 Let us show that \[2\sum_{j=1}^{2n} q_j w_j \ge \sum_{j=1}^{2n} p_j w_j\,.\]
Since $w_1$, $w_2$, \ldots, $w_{2n}$ is a non-increasing sequence of non-negative numbers, it is sufficient to show that for all $i$, $i=1,\ldots,n$  we have
\begin{equation}\label{eq:main}
    2\sum_{j=1}^i q_j \ge \sum_{j=1}^i p_j\,.
\end{equation}
Observe that it is sufficient to prove~\eqref{eq:main} only for $i \le k^*$, since $p_j = 0$ when~$j > k^*$.

    \begin{claim}\label{claim1} If $j<j^*$ or $j^* < j < k^*$, then we have $2 q_j \ge p_j$.
    \end{claim}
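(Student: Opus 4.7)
The plan is to verify the inequality $2q_j \ge p_j$ directly in each of the subcases for which the formulas for $q_j$ and $p_j$ are defined. First I would rewrite both sides with the common denominator $2^{j+2}$. For $j < j^*$, we have $p_j = j/2^j = 4j/2^{j+2}$, so the target is $2q_j \cdot 2^{j+2} \ge 4j$. For $j^* < j < k^*$, we have $p_j = 1/2^{j-2} = 4/2^j$, and the target becomes $2q_j \cdot 2^j \ge 4$.

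Next I would plug in the case formulas for $q_j$ from Lemma~\ref{lm:gambler_prob} and reduce each one to a linear inequality in $j$. Concretely, when $j \le j^* - 2$ the bound $q_j = (3j-1)/2^{j+2}$ gives the inequality $6j - 2 \ge 4j$, which holds for $j \ge 1$. When $j = j^*-1$ with $k^* > j^*+1$ and $j^*-1 = j^y$, the bound $(4j-2)/2^{j+2}$ gives $8j - 4 \ge 4j$, valid for $j \ge 1$. When $j = j^*-1$ with $k^* = j^*+1$, the bound $4j/2^{j+2}$ gives $8j \ge 4j$, which is trivial. For the two subcases in the range $j^* < j < k^*$, the bounds $q_j = 3/2^j$ and $q_j = 1/2^{j-2}$ yield $6 \ge 4$ and $2 \ge 1$ respectively.

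The only subcase that does not immediately close is $j = j^*-1$, $k^* > j^*+1$, $j^*-1 \ne j^y$, where $q_j = (4j-3)/2^{j+2}$ gives the inequality $8j - 6 \ge 4j$, i.e. $j \ge 2$. I would handle the potential exception $j=1$ by a structural argument: if $j = j^* - 1 = 1$, then $j^* = 2$, and the unique element strictly smaller than $j^*$ in the sequence is $1$. Since $j^y < j^*$ and $j^y$ is an element of $\{1,\dots,j^*-1\}$, necessarily $j^y = 1 = j^* - 1$, contradicting the subcase assumption $j^*-1 \ne j^y$. Hence this subcase forces $j \ge 2$, and the inequality $2q_j \ge p_j$ is recovered.

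I expect the exclusion of $j=1$ in that one subcase to be the only subtle point; everything else is a routine substitution. The cleanest way to present the claim is as a short case table following the breakdown of Lemma~\ref{lm:gambler_prob}, so I would list the subcases in the same order and note the small structural observation ruling out $j=1$ when needed.
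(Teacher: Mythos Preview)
Your proposal is correct and follows essentially the same approach as the paper's proof, namely a direct case check against the formulas of Lemma~\ref{lm:prophet_prob} and Lemma~\ref{lm:gambler_prob}. The paper condenses the $j<j^*$ subcases into the single observation that either $q_j\ge(3j-1)/2^{j+2}$ holds outright or one is in the $(4j-3)/2^{j+2}$ subcase with $j\ge2$, whereas you treat each subcase separately; but this is only a presentational difference. Your explicit structural argument that $j=j^*-1=1$ forces $j^y=1$ (hence excludes $j=1$ from the $j^*-1\neq j^y$ subcase) is exactly the point the paper asserts without spelling out, so your write-up is in fact slightly more complete.
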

    Consequently, this claim directly proves~\eqref{eq:main} for $i \le j^* - 1$.
    \begin{proof}[Proof of Claim~\ref{claim1}]
    We use Lemma~\ref{lm:prophet_prob} and Lemma~\ref{lm:gambler_prob}.

    If $j < j^*$, then we have $p_j = j{/}2^j$. If $j < j^*$, then we have $q_j \ge (3j - 1){/}2^{j+2}$, or $j\geq 2$ and  $q_j = (4j - 3){/}2^{j+2}$; so in either case $q_j \ge (3j - 1){/}2^{j+2}$. Thus, we have
  \[q_j \ge (3j - 1){/}2^{j+2} \ge 2j{/}2^{j+2} = p_j/2\,,\]
    where the second inequality uses $j \ge 1$.

    If $j^* < j < k^*$, then we have $p_j = 4{/}2^j$ and  $q_j \geq 3{/}2^j$, and so  $q_j \ge  p_j/2$.
    \end{proof}

    \begin{claim}\label{claim2} If $ k^*>j^*+1$, then we have $2 q_{k^*-1} \ge p_{k^*}+p_{k^*-1}$.
    \end{claim}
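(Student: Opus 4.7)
The plan is to prove the claim as an essentially direct computation from Lemmas~\ref{lm:prophet_prob} and~\ref{lm:gambler_prob}, observing that the condition $k^* > j^*+1$ places the index $k^*-1$ strictly between $j^*$ and $k^*$ and also lands it in the special subcase $j = k^*-1$ of Lemma~\ref{lm:gambler_prob}, which provides the stronger bound needed.

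First I would establish the values of the two relevant prophet probabilities. Since $k^* > j^*+1$ gives $j^* < k^*-1 < k^*$, Lemma~\ref{lm:prophet_prob} yields $p_{k^*-1} = 1/2^{(k^*-1)-2} = 1/2^{k^*-3}$. By the same lemma, $p_{k^*} = 1/2^{k^*-3}$. Hence
\[
p_{k^*-1} + p_{k^*} \;=\; \frac{2}{2^{k^*-3}}\,.
\]

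Next I would apply the last case of Lemma~\ref{lm:gambler_prob}, namely the subcase $j^* < j < k^*$ with $j = k^*-1$, which supplies the sharper estimate $q_{k^*-1} \ge 1/2^{(k^*-1)-2} = 1/2^{k^*-3}$. Multiplying by $2$ then gives $2q_{k^*-1} \ge 2/2^{k^*-3} = p_{k^*-1}+p_{k^*}$, which is exactly the desired inequality (in fact an equality).

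There is no real obstacle here; the only thing to watch is that one uses the tighter of the two bounds available in Lemma~\ref{lm:gambler_prob} for the range $j^* < j < k^*$ (the general bound $3/2^j$ would be strictly weaker and would not suffice, since it would only give $2q_{k^*-1} \ge 6/2^{k^*-1} = 3/2^{k^*-2}$, which is less than the required $4/2^{k^*-2}$). The whole point of having the separate subcase $j = k^*-1$ in Lemma~\ref{lm:gambler_prob} is precisely to absorb the ``extra'' prophet probability $p_{k^*}$ into the gambler's bound at position $k^*-1$, which is what Claim~\ref{claim2} is encoding.
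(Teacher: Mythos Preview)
Your proposal is correct and follows exactly the same approach as the paper: both compute $p_{k^*-1}=p_{k^*}=1/2^{k^*-3}$ from Lemma~\ref{lm:prophet_prob} and invoke the last case of Lemma~\ref{lm:gambler_prob} (the subcase $j=k^*-1$) to get $q_{k^*-1}=1/2^{k^*-3}$, yielding the claim with equality. The only cosmetic difference is that you write $q_{k^*-1}\ge 1/2^{k^*-3}$, whereas by the definition of $q_j$ in Lemma~\ref{lm:gambler_prob} this is actually an equality; this does not affect the argument.
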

\begin{proof}[Proof of Claim~\ref{claim2}]
We use Lemma~\ref{lm:prophet_prob} and Lemma~\ref{lm:gambler_prob} to show
    \[p_{k^*}+p_{k^*-1}=\frac{1}{2^{k^*-3}}+\frac{1}{2^{k^*-3}}=2\cdot\frac{1}{2^{k^*-3}}=2q_{k^*-1}\,.\]
\end{proof}

    \begin{claim}\label{claim3} We have $2(q_{j^* - 1} + q_{j^*}) \geq p_{j^* - 1} + p_{j^*}$. Moreover, if $k^*=j^*+1$ then we have $2(q_{j^* - 1} + q_{j^*}) \geq p_{j^* - 1} + p_{j^*}+p_{k^*}$.
    \end{claim}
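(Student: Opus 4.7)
The plan is to prove Claim~\ref{claim3} by a direct arithmetic verification that combines the exact prophet probabilities from Lemma~\ref{lm:prophet_prob} with the lower bounds on the gambler's probabilities from Lemma~\ref{lm:gambler_prob}. First I would compute the prophet side: both $p_{j^*-1}$ and $p_{j^*}$ equal $(j^*-1)/2^{j^*-1}$, so $p_{j^*-1}+p_{j^*} = (4j^*-4)/2^{j^*}$; in the special case $k^* = j^*+1$, the additional term $p_{k^*} = 1/2^{k^*-3} = 4/2^{j^*}$ raises the target to $4j^*/2^{j^*}$.

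Next I would split on the subcases of Lemma~\ref{lm:gambler_prob} for $j = j^*-1$, indexed by whether $k^*>j^*+1$ or $k^*=j^*+1$, and (in the former) whether $j^*-1 = j^y$. When $k^*>j^*+1$, I would use $q_{j^*} \ge 3/2^{j^*+1}$ together with $q_{j^*-1} \ge (4j^*-6)/2^{j^*+1}$ in the subcase $j^*-1 = j^y$ and $q_{j^*-1} \ge (4j^*-7)/2^{j^*+1}$ otherwise; the sum $2(q_{j^*-1}+q_{j^*})$ then evaluates to $(4j^*-3)/2^{j^*}$ or $(4j^*-4)/2^{j^*}$ respectively, each reaching the target $(4j^*-4)/2^{j^*}$. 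When $k^*=j^*+1$, Lemma~\ref{lm:gambler_prob} gives the single bound $q_{j^*-1} \ge (4j^*-4)/2^{j^*+1}$ together with $q_{j^*} \ge 4/2^{j^*+1}$, so $2(q_{j^*-1}+q_{j^*}) \ge 4j^*/2^{j^*}$, matching the strengthened target with equality.

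The main obstacle here is organizational rather than mathematical: the inequality is tight in three of the four subcases, leaving no slack to absorb an indexing error, and both lemmas are indexed by the same binary switches ($k^*$ versus $j^*+1$ and $j^*-1$ versus $j^y$), so any mismatch in the case bookkeeping would silently break the claim. To guard against this I would arrange the four subcases as rows of a small table listing the bound on $q_{j^*-1}$, the bound on $q_{j^*}$, and the prophet side, each followed by a one-line arithmetic check, before committing to prose.
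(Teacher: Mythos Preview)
Your proposal is correct and follows essentially the same approach as the paper: a direct arithmetic check combining Lemma~\ref{lm:prophet_prob} with Lemma~\ref{lm:gambler_prob}, split according to whether $k^*=j^*+1$ or $k^*>j^*+1$. The only cosmetic difference is that in the case $k^*>j^*+1$ the paper immediately uses the weaker of the two bounds, $q_{j^*-1}\ge(4j^*-7)/2^{j^*+1}$, rather than treating the sub-subcases $j^*-1=j^y$ and $j^*-1\neq j^y$ separately as you do; since both of your sub-subcases clear the target, your extra split is harmless but unnecessary.
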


    \begin{proof}[Proof of Claim~\ref{claim3}]
    We use Lemma~\ref{lm:prophet_prob} and Lemma~\ref{lm:gambler_prob}.

    If $k^* > j^* + 1$, then we have
    \begin{align*}
        p_{j^* - 1} + p_{j^*} =\frac{2j^*-2}{2^{j^*-1}}=
        2\left(\frac{4j^* - 7}{2^{j^*+1}} + \frac{3}{2^{j^*+1}} \right)\leq 2(q_{j^*-1}+q_{j^*})\,.
    \end{align*}

    If $k^* = j^* + 1$, we have
        \begin{align*}
        p_{j^* - 1} + p_{j^*}+p_{k^*}= \frac{2j^*-2}{2^{j^*-1}} + \frac{1}{2^{j^*-2}} =
        2\left(\frac{4j^*-4}{2^{j^*+1}}+ \frac{4}{2^{j^*+1}} \right)=2(q_{j^*-1}+q_{j^*})\,,
    \end{align*}
finishing the proof.
    \end{proof}

\end{proof}

\bibliography{literature}

\begin{thebibliography}{10}

\bibitem{azaruniform}
S.~Alaei.
\newblock Bayesian combinatorial auctions: Expanding single buyer mechanisms to
  many buyers.
\newblock {\em SIAM Journal on Computing}, 43(2):930--972, 2014.

\bibitem{azar}
P.~D. Azar, R.~Kleinberg, and S.~M. Weinberg.
\newblock Prophet inequalities with limited information.
\newblock In {\em Proceedings of the Twenty-Fifth Annual ACM-SIAM Symposium on
  Discrete Algorithms}, SODA '14, page 1358–1377, USA, 2014. Society for
  Industrial and Applied Mathematics.

\bibitem{caramanisdutting}
C.~Caramanis, P.~D{\"u}tting, M.~Faw, F.~Fusco, P.~Lazos, S.~Leonardi,
  O.~Papadigenopoulos, E.~Pountourakis, and R.~Reiffenh{\"a}user.
\newblock Single-sample prophet inequalities via greedy-ordered selection.
\newblock In {\em Proceedings of the 2022 Annual ACM-SIAM Symposium on Discrete
  Algorithms (SODA)}, pages 1298--1325. SIAM, 2022.

\bibitem{caramanis2021revisited}
C.~Caramanis, M.~Faw, O.~Papadigenopoulos, and E.~Pountourakis.
\newblock Single-sample prophet inequalities revisited.
\newblock {\em arXiv preprint arXiv:2103.13089}, 2021.

\bibitem{chawla2010multi}
S.~Chawla, J.~D. Hartline, D.~L. Malec, and B.~Sivan.
\newblock Multi-parameter mechanism design and sequential posted pricing.
\newblock In {\em Proceedings of the forty-second ACM symposium on Theory of
  computing}, pages 311--320, 2010.

\bibitem{plaxton}
N.~B. Dimitrov and C.~G. Plaxton.
\newblock Competitive weighted matching in transversal matroids.
\newblock {\em Algorithmica}, 62(1):333--348, 2012.

\bibitem{dutting2015polymatroid}
P.~D{\"u}tting and R.~Kleinberg.
\newblock Polymatroid prophet inequalities.
\newblock In {\em Algorithms-ESA 2015}, pages 437--449. Springer, 2015.

\bibitem{gravin2019prophet}
N.~Gravin and H.~Wang.
\newblock Prophet inequality for bipartite matching: Merits of being simple and
  non adaptive.
\newblock In {\em Proceedings of the 2019 ACM Conference on Economics and
  Computation}, pages 93--109, 2019.

\bibitem{jaillet}
P.~Jaillet, J.~A. Soto, and R.~Zenklusen.
\newblock Advances on matroid secretary problems: Free order model and laminar
  case.
\newblock In M.~Goemans and J.~Correa, editors, {\em Integer Programming and
  Combinatorial Optimization}, pages 254--265, Berlin, Heidelberg, 2013.
  Springer Berlin Heidelberg.

\bibitem{kw2012matroid}
R.~Kleinberg and S.~M. Weinberg.
\newblock Matroid prophet inequalities.
\newblock In {\em Proceedings of the forty-fourth annual ACM symposium on
  Theory of computing}, pages 123--136, 2012.

\bibitem{korula}
N.~Korula and M.~P\'{a}l.
\newblock Algorithms for secretary problems on graphs and hypergraphs.
\newblock In {\em Proceedings of the 36th Internatilonal Collogquium on
  Automata, Languages and Programming: Part II}, ICALP '09, page 508–520,
  Berlin, Heidelberg, 2009. Springer-Verlag.

\bibitem{krengel1977semiamarts}
U.~Krengel and L.~Sucheston.
\newblock Semiamarts and finite values.
\newblock {\em Bulletin of the American Mathematical Society}, 83(4):745--747,
  1977.

\bibitem{ma}
T.~Ma, B.~Tang, and Y.~Wang.
\newblock The simulated greedy algorithm for several submodular matroid
  secretary problems.
\newblock {\em Theory of Computing Systems}, 58(4):681--706, 2016.

\bibitem{rubinstein}
A.~Rubinstein, J.~Wang, and S.~M. Weinberg.
\newblock Optimal single-choice prophet inequalities from samples.
\newblock In {\em Information Technology Convergence and Services}, 2019.

\bibitem{samuelcahn1984comparison}
E.~Samuel-Cahn.
\newblock Comparison of threshold stop rules and maximum for independent
  nonnegative random variables.
\newblock {\em the Annals of Probability}, pages 1213--1216, 1984.

\bibitem{soto2013matroid}
J.~A. Soto.
\newblock Matroid secretary problem in the random-assignment model.
\newblock {\em SIAM Journal on Computing}, 42(1):178--211, 2013.

\end{thebibliography}
\bibliographystyle{abbrv}
\end{document}